\definecolor{red}{RGB}{255,0,0}
\definecolor{blue}{RGB}{0,0,255}
\definecolor{green}{RGB}{0,255,0}
\newcommand {\abs}[1]  {\left\vert#1\right\vert}
\newcommand {\set}[1]  {\left\{#1\right\}}
\newcommand {\defined} {\stackrel{def} {=}}
\newcommand {\nph}     {\textsc{NP}\textrm{-hard}}
\newcommand {\apxh}    {\textsc{APX}\textrm{-hard}}
\newcommand {\maxsnph}    {\textsc{Max-Snp}\textrm{-hard}}
\newcommand {\ptas}     {\textsc{PTAS}}
\newcommand {\nohyp}   {\textsc{P}=\textsc{NP}}
\newcommand {\runningtitle}[1] {\vspace{0.5ex}\noindent{\textbf{\boldmath #1:}}}
\renewcommand {\comment}[1] {\textcolor{red}{$\bigstar$ #1 $\bigstar$\\}}
\renewcommand {\comment}[1] {}
\newcommand{\ignore}[1] {}
\newcommand{\sm}{\setminus}
\newcommand{\OO}{\mathcal{O}}
\newcommand{\probl}[3]{
\begin{flushleft}
\fbox{
\begin{minipage}{\linewidth}
\noindent {\sc #1}\\
          {\bf Input:} #2\\
          {\bf Output:} #3
\end{minipage}}
\medskip
\end{flushleft}
}
\newtheorem{observation}{Observation}
\newcommand{\mstbl}  {\textsc{MaxSubtreesBL}}
\newcommand{\mc}[1]  {{\sc Max $#1$-Colorable Set}}
\newcommand{\mkc}    {{\mc{k}}}
\newcommand{\cc}   {{\mathcal C}}
\newcommand{\cl}   {{\mathcal L}}
\newcommand{\oo}   {{\mathcal O}}
\newcommand{\TT}   {{\mathcal T}}
\newcommand{\opt}  {\textsc{Opt}}
\newcommand{\vpt}  {\textsc{Vpt}}
\newcommand{\rep}  {\left<T,\TT\right>}
\newcommand{\vect}[1]{\mathbf{#1}}
\newcommand{\load}{load}
\title{Multicast Communications in Tree Networks with Heterogeneous Capacity Constraints}
\author{Yuval Emek}{Department of Industrial Engineering and Management, Technion, Haifa, Israel.}{yemek@technion.ac.il}{https://orcid.org/0000-0002-3123-3451}{}
\author{Shay Kutten}{Department of Industrial Engineering and Management, Technion, Haifa, Israel.}{kutten@technion.ac.il}{}{}
\author{Mordechai Shalom}{TelHai Academic College, Upper Galilee, 12210, Israel,\\
Holon Institute of Technology, Holon,Israel.}{cmshalom@telhai.ac.il}{https://orcid.org/0000-0002-2688-5703}{}
\author{Shmuel Zaks}{Department of Computer Science, Technion, Haifa,  Israel,\\
School of Engineering, Ruppin Academic Center, Israel.}{zaks@cs.technion.ac.il}{https://orcid.org/0000-0001-5637-4923}{}
\authorrunning{Y. Emek et al.}%mandatory. First: Use abbreviated first/middle names. Second (only in severe cases): Use first author plus 'et al.'
\keywords{Multicast Communications, Chordal graph, Maximum $k$-colorable set, Maximum multi-commodity flow}%TODO mandatory; please add comma-separated list of keywords
\begin{document}

\maketitle 
\begin{abstract}
    A widely studied problem in communication networks is that of finding the maximum number of communication requests that can be scheduled concurrently, 
subject to node and/or link capacity constraints. 
In this paper, we consider the problem of finding the largest number of multicast communication requests that can be serviced simultaneously by a network of tree topology,
subject to heterogeneous capacity constraints.
This problem generalizes the following two problems studied in the literature:
a) the problem of finding a largest induced $k$-colorable subgraph of a chordal graph,
b) the maximum multi-commodity flow problem in tree networks.

The problem is already known to be {\nph} and to admit a $c$-approximation ($c \approx 1.58$) in the case of homogeneous capacity constraints.
We first show that the problem is much harder to approximate in the heterogeneous case.
We then use a generalization of a classical algorithm to obtain an $M$-approximation 
where $M$ is  the  maximum  number  of  leaves  of  the  subtrees representing the multicast communications.
Surprisingly, the same algorithm, though in various disguises, is used in the literature at least four times to solve related problems (though the analysis is different).

The special case of the problem where instances are restricted to unicast communications in a star topology network is known to be polynomial-time solvable.
We extend this result and show that the problem can be solved in polynomial time for a set of paths in a tree that share a common vertex.
\end{abstract}

\newpage
\setcounter{page}{1}

\section{Introduction}
\subsection{Motivation and Background}
\runningtitle{Motivation}
Consider a communication graph $G$, in which the vertices represent telephone stations
(trunks) or internet computers and the edges represent connection lines. 
Each station and connection line has a given capacity to handle transmissions. 
Given a set of transmission requests as paths between vertices, the problem is to find the maximum set of transmissions
(paths in $G$) which can be handled at any moment. 
Thus, we must find a maximum set of paths that do not overload the capacity of the vertices or edges.

This scenario occurs in numerous applications in communication networks, and it can be interpreted also in the context of a production line, where vertices are associated with machines, and bounds with number of units that can be produced by a machine during one time unit. 
In these applications, the aim is to maximize the number of production lines that can be active simultaneously. 
Such algorithms can be used in successive steps, so as to schedule all of the given requests. 

Stated in graph-theoretical terms, this is a special case of the  fundamental class of graph optimization problems, in which  the objective is to find a maximum number of given subgraphs under a given set of constraints. 
Specifically, when the communication graph $G$ is a tree and the given subgraphs are subtrees of $G$, the problem generalizes the maximum $k$-colorable set problem in chordal graphs.
In that problem, one looks for the maximum number of vertices of a given graph that can be colored with $k$ colors (such that no two vertices of the same color are adjacent). 

\runningtitle{Graph classes and subtree representations}
In this paper, we mention the following classes of 
graphs that can be represented as intersections of subtrees of a tree.
\begin{itemize}
\item {\em Chordal graphs}: a graph is \emph{chordal} if it does not contain induced cycles of length four or more.
In other words, in a chordal graph, every cycle with more than three vertices contains a chord (i.e., an edge that joins two non-adjacent vertices of the cycle).
It is known that a graph is chordal if and only if it is the vertex-intersection graph of subtrees of a tree ~\cite{Gavril74}.
It is also known that a graph is chordal if and only if it has a perfect elimination order of its vertices.
Such an order corresponds to the traversal of the representation tree in a bottom up manner and each time considering the subtrees that have the current vertex as their root.
It is also known that the chromatic number of a chordal graphs (the graph being perfect) equals its clique number and equals the maximum number of subtrees in the representation that share a common vertex.
We refer the reader to \cite{Golumbic:2004:AGT:984029} for more details on chordal graphs.
\item {\em {\vpt} graphs}:
Such a graph is the vertex-intersection graph of paths in a tree. 
In other words, a {\vpt} graph is a chordal graph that has a representation in which every subtree is a path.
\item {\em Directed path graphs}: Such a graph is the vertex-intersection graph of (directed) paths in a rooted tree.
We emphasize that a directed path graph is an undirected graph. 
In fact, it is easy to see that it is a {\vpt} graph.
\item {\em Interval graphs}: an \emph{interval graph} is the intersection graph of intervals on the real line. 
Clearly, an interval graph is a directed path graph.
\end{itemize}

\subsection{Related Work and Our Contribution}
\runningtitle{Homogeneous capacities}
The maximum independent set problem on an interval graph is equivalent to the problem of finding a maximum cardinality subset of a given set of intervals such that no pair of them intersect.
Such a set of intervals can be found using the Earliest Deadline First (EDF) algorithm \cite{Stankovic1998}.
This algorithm processes the intervals greedily in the right-endpoint order, and 
an interval is added to the solution whenever it does not intersect an interval already in the solution, or in other words, whenever its addition to the solution would preserve feasibility.

An extension of this problem is the maximum independent set problem in chordal graphs. 
A polynomial-time greedy algorithm for this problem is presented by Gavril \cite{Gavril72AlgorithmsForChordalGraphs}.
In fact, this algorithm is a generalization of the EDF algorithm:
vertices are processed in a perfect elimination order, and a vertex is added to the solution as long as it preserves feasibility.
Recall that a perfect elimination order corresponds to a traversal of the tree in a bottom-up manner.
Clearly, when the tree is a path, this order becomes the right-endpoint order.

Another extension is the maximum $k$-colorable set problem in interval graphs, which is the problem of finding a maximum cardinality subset of vertices that can be partitioned into $k$ independent sets.
A polynomial-time greedy algorithm for this problem is presented by Gavril and Yannakakis \cite{GY87}.
The algorithm can be stated in the same way:
"process the intervals in the right-endpoint order and add it to the solution if it preserves feasibility".
The same algorithm was suggested also in \cite{carlisle1995k}.
In this work, Gavril and Yannakakis showed that the last two extensions above, applied together, make the problem {\nph}.
Namely, to find a maximum $k$-colorable subgraph of a chordal graph is {\nph}.
This problem is equivalent to the problem of finding a maximum number of subtrees among a given set of subtrees of a tree such that 
every vertex is contained in at most $k$ of them.
It is worth noting that this is the homogeneous variant of our problem in which all the vertices (and edges) have the same capacity of $k$ subtrees.

Chakaravarthy and Roy \cite{ChakaravartyRoy09MaxkColorableChordal} suggested a 2-approximation algorithm for the weighted version of the problem, in which every subtree has a weight and the value of a solution is the sum of the weights of its subtrees.
Their algorithm is a two pass algorithm where the first pass suffices to solve the unweighted variant of the problem.
As they point out, this first pass is a greedy algorithm that processes the vertices in a perfect elimination order, and if run with $k=1$ it reduces to Gavril's algorithm for maximum independent sets.
Putting differently, the algorithm is again: "process vertices in a perfect elimination order and add a vertex to the solution if it preserves feasibility".
However, a different algorithm with a better approximation ratio, namely $1.582$, was proposed earlier by Wan and Liu \cite{WanLiu98MaximumThroughputinWRO}.
Though they did not consider the same problem explicitly, they showed that for every graph class for which the maximum independent set problem can be solved in polynomial time,
the maximum $k$-colorable subgraph problem can be approximated with an approximation ratio of $\frac{e}{e-1} \approx 1.582$.

\runningtitle{Heterogeneous capacities}
In ~\cite{GargVY97} Garg, Vazirani and Yannakakis studied the maximum integral multi-commodity flow problem in tree networks. 
In this problem, we are given pairs of vertices of a tree with capacities on its edges.
The goal is to find a maximum number of pairs of vertices (i.e., paths of the tree) such that the number of paths that use an edge does not exceed its capacity.
They provide an algorithm to solve the problem and another one to solve its dual, 
and they show that the solutions are at most a factor of 2 away from each other,
thus providing a primal-dual proof of the algorithm being a $2$-approximation.
The algorithm that solves the primal problem is in fact the same greedy algorithm: 
"process the tree in some bottom-up order, at every vertex consider all the paths whose highest vertex is this one and a path to the solution if it preserves feasibility".
In their work, they also showed that the maximum multi-commodity flow problem is {\maxsnph} for general tree networks.

\runningtitle{Our contributions:}
In the current paper, we first show that the heterogeneous version of the problem is much harder to approximate when one considers subtrees instead of paths.
Specifically, it cannot be approximated to within $n^{1-\epsilon}$ for any $\epsilon > 0$ where $n$ is the number of subtrees.
On the other hand, when the subtrees are claws (i.e. sub-stars with 3 leaves) and all the loads are $1$ except for the center vertex, the problem becomes $\apxh$.
Then we use the same greedy algorithm mentioned above to obtain an $M$-approximation, 
where $M$ is the maximum number of leaves of the trees in the representation, and the root of a tree is not counted as a leaf.
We provide a direct proof of this fact, i.e. one that does not use duality theory, which is thus different from the above-mentioned algorithm of~\cite{GargVY97}.
Besides considering trees (instead of paths), we allow for a  demand to be specified per subtree (i.e., per multicast) basis.
This result implies a $2$-approximation for the maximum multi-commodity flow problem with demands and an optimal algorithm for the maximum $k$-colorable subgraph problem in directed path graphs. 
Figure \ref{fig:GreedyAlgorithmEvolution} summarizes the above discussion.

\begin{figure}
\begin{center}
\includegraphics[width=\textwidth]{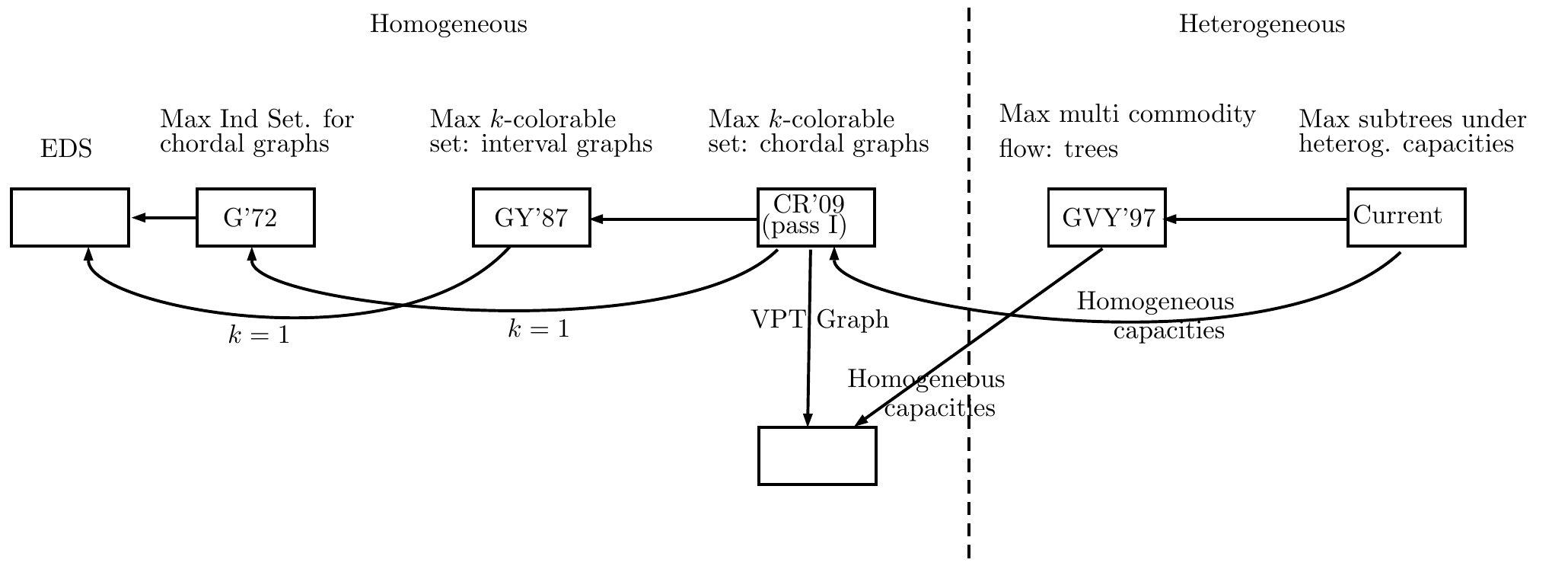}
\caption{
Applicability of variants of the greedy algorithm to various problems. An arrow from rectangle $A$ to rectangle $B$ indicates that the algorithm in $A$ can be used to solve the problem in $B$. The description on the arrow indicates inputs that algorithm in $A$ might need in addition to an instance of problem $B$.
Whenever algorithm $A$ can be applied to problem $B$, if ties are broken in the same way (namely if they use the same perfect elimination ordering/bottom-up traversal), algorithm $A$ simulates the run of the algorithm in $B$, and produces the same output in the same order.
} \label{fig:GreedyAlgorithmEvolution}
\end{center}
\end{figure}

In \cite{GargVY97}, it is shown that the problem of finding the largest number of paths in a star network under heterogeneous capacity constraints can be solved in polynomial time.
We also extend this result and show that the problem is polynomial-time solvable for any tree, provided that the paths share a common a vertex.
In addition to the rigorous analysis, we tested our greedy algorithm on random data and real-life data.
We found out that its performance is very close to optimum.

\runningtitle{Applications}
In \cite{halldorsson2003sum}, an optimal algorithm that solves of the maximum $k$-colorable subgraph problem optimally is used
as a subroutine to get an approximation algorithm for the minimum sum coloring problem in interval graphs.
This problem corresponds to the scheduling of items to minimize average completion time.
In \cite{misra2019parameterized}, it was used in the context of parameterized algorithms. 
Examples of works in which the maximum $k$-colorable set problem is used in applications are \cite{erlebach1998maximizing}, 
in the context of optical networks with wavelength division multiplexing (WDM), where multiple connections can share a link if they are transmitted on different wavelengths, and the aim is to satisfy a maximum number of connection requests in a tree network; \cite{bermond2006optimal}, where it was used 
in the context of traffic grooming in optical networks; and in \cite{krause2013optimal},
in the context of register allocation, where it  corresponds to the case where the input programs are in static single assignment form.

\runningtitle{Paper structure:}
In Section \ref{sec:prelim}, we present basic terms and notations, formally present the optimization problems and provide hardness results.
In Section \ref{sec:mstbl}, we present our algorithms and their analysis.
We summarize the theoretical results, present simulation results and discuss open problems in Section \ref{sec:conc}.
 
\section{Preliminaries}\label{sec:prelim}
\runningtitle{Sets and Multisets}
For two non-negative integers $n_1,n_2$, denote by  $[n_1,n_2]$ 
the set of integers that are not smaller than $n_1$ and not larger than $n_{2}$.
We also use $[n]$ as a shorthand for $[1,n]$.
Given a multiset $S$ and an element $e$, we denote by $m_S(e)$ the \emph{multiplicity} of $e$ in $S$,
i.e., the number of times $e$ appears in $S$.
Clearly, $m_S(e) \neq 0$ if and only if $e \in S$.
Also, $S' \subseteq S$ if and only if $m_{S'}(e) \leq m_S(e)$ for every $e$.
We denote by $\hat{S}$ the set of distinct elements of $S$.
We say that a multiset $S$ is given in \emph{compact} form if it is given by the set $\hat{S}$ of its distinct elements and their associated multiplicities.
  
\runningtitle{Graphs}
We use standard terminology and notation for graphs, see for instance \cite{D12}.
Given a simple undirected graph $G$, we denote by $V(G)$ the set of vertices of $G$ and by $E(G)$ the set of the edges of $G$.
Denote an edge between two vertices $u$ and $v$ as $uv$.
We say that the edge $uv \in E(G)$ is {\em incident} to $u$ and $v$, 
$u$ and $v$ are the endpoints of $uv$, 
and $u$ is adjacent to $v$ (and vice versa).
We use the term \emph{object} to refer to a vertex or an edge.

A \emph{coloring} of a graph $G$ is a function from $V(G)$ into the set of positive natural numbers.
A coloring of $G$ is \emph{proper} if $u$ and $v$ are assigned different colors by the coloring whenever $uv$ is an edge of $G$.  
A graph is $k$-colorable if it has a proper coloring of its vertices using colors from $[k]$.
The chromatic number of a graph $G$, denoted by $\chi(G)$, is the smallest integer $k$ such that $G$ is $k$-colorable. 
The clique number $\omega(G)$ of a graph $G$ is the size of its largest clique.

\runningtitle{Trees, subtrees and their intersection graphs}
Let $U$ be a set of subtrees of a tree $T$, and $o$ an object of $T$.
Denote as $U_o$ the set of subtrees in $U$ that contain the object $o$,
and by $\load(U,o)$ their number $\abs{U_o}$ which is termed as the \emph{load} of $o$ induced by $U$. 
Denote by $\vect{\load(U)}$ the vector consisting of the values $\load(U,o)$ indexed by the objects of $T$,
and by $L_V(U)$ (resp. $L_E(U)$) the maximum of $\load(U,o)$ over all vertices (resp. edges) $o$ of $T$.

Let $\TT = \set{ T_1, \ldots, T_n}$ be a set of subtrees of a tree $T$.
Let $G=(V,E)$ be a graph over the vertices $\set{v_1, \ldots, v_n}$ such that $v_i$ and $v_j$ are adjacent if and only if $T_i$ and $T_j$ have a common vertex.
Then, $G$ is termed as the \emph{vertex-intersection graph} of $\rep$ and conversely $\rep$ is termed a subtree intersection representation, 
or simply a \emph{representation} of $G$. 
Whenever there is no confusion, we will denote the representation $\rep$ simply as $\TT$.

\runningtitle{Graph classes} 
We have already introduced the following graph classes: Chordal graphs, $\vpt$ graphs, directed path graphs, and interval graphs.
Let us now mention some of their properties needed for this paper. 

Let $G$ be a chordal graph with a representation $\rep$, and $v$ a vertex of $T$.
Clearly, the set $\TT_v$ of subtrees in $\TT$ that contain the vertex $v$ corresponds to a clique of $G$.
It is also known that there is a representation (where $T$ is termed the clique-tree of $G$) such that every maximal clique of $G$ corresponds to $\TT_v$ for some vertex $v$ of $T$. 
Therefore, the clique number $\omega(G)$ of $G$ is equal to $L_V(\TT)$.
Since chordal graphs are perfect, we have $\chi(G)=\omega(G)$.
It is known that one can find a representation of a given chordal graph $G$ in polynomial time, and such a representation $\rep$ can be used to determine the chromatic number of $G$, since $\chi(G)=\omega(G)=L_V(\TT)$.

\runningtitle{Problem Statements} 
\probl
{Maximum $k$-Colorable Set (\mkc)}
{A graph $G$, a positive integer $k$.}
{A $k$-colorable induced subgraph of $G$ of maximum order.}

Since the decision version of the chromatic number problem is equivalent to asking whether the maximum $k$-colorable set is the entire vertex set, we have the following observation.
\begin{observation}\label{obs:ChromaticNPH}
If the chromatic number problem is $\nph$ when restricted to a class of graphs $\cc$, then \mkc, when restricted to $\cc$, is also $\nph$. 
\end{observation}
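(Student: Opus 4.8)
The plan is to argue by a straightforward Karp reduction from the chromatic number decision problem (restricted to the class $\cc$) to the decision version of \mkc{} (also restricted to $\cc$). The key observation is the one already spelled out in the sentence preceding the statement: a graph $G$ on $N$ vertices is $k$-colorable if and only if its maximum $k$-colorable induced subgraph has order exactly $N$, i.e. is all of $V(G)$. So the two problems are essentially the same question once we fix the right parameters.

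First I would recall that the decision version of the chromatic number problem, given a graph $G$ and an integer $k$, asks whether $\chi(G) \le k$; by hypothesis this is $\nph$ when the input $G$ is drawn from $\cc$. Second, I would define the decision version of \mkc{}: given $G$, a positive integer $k$, and a target $t$, decide whether $G$ has a $k$-colorable induced subgraph of order at least $t$. Third, the reduction: on input $(G,k)$ with $G \in \cc$, output the instance $(G, k, \abs{V(G)})$ of \mkc{}. This map is clearly computable in polynomial (indeed linear) time and it leaves $G$ unchanged, so the produced instance is still an instance over $\cc$; hence the reduction respects the restriction to $\cc$. Fourth, correctness: an induced subgraph of $G$ of order $\abs{V(G)}$ is $G$ itself, and $G$ — viewed as an induced subgraph of itself — is $k$-colorable precisely when $\chi(G) \le k$. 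Therefore $(G,k)$ is a \yes instance of the chromatic number problem if and only if $(G,k,\abs{V(G)})$ is a \yes instance of \mkc, which establishes that \mkc{} restricted to $\cc$ is $\nph$.

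I do not anticipate a genuine obstacle here; the only point requiring a word of care is the phrase "equivalent" in the statement — one should be explicit that we are comparing the decision versions, and that "maximum $k$-colorable set is the entire vertex set" is the same as "$G$ is $k$-colorable", which is immediate from the definition of a $k$-colorable induced subgraph. If one wants the optimization version rather than the decision version, the same one-line argument works: an algorithm returning a maximum $k$-colorable induced subgraph lets us read off $\chi(G)$ by finding the least $k$ for which the returned subgraph has order $\abs{V(G)}$, and this binary-search-free check over $k \in [\abs{V(G)}]$ is polynomial.
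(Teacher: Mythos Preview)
Your proposal is correct and follows exactly the approach the paper intends: the paper offers no formal proof at all, only the one-sentence justification preceding the observation (that deciding $\chi(G)\le k$ is the same as asking whether the maximum $k$-colorable set is all of $V(G)$), and your argument is simply a clean formalization of that sentence via the obvious Karp reduction $(G,k)\mapsto(G,k,\abs{V(G)})$.
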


Recall that a representation $\rep$ of a chordal graph $G$ can be found in polynomial time 
and a set of vertices of $G$ is $k$-colorable if and only if the corresponding set $U \subseteq \TT$ of subtrees satisfies $\load(U,v) \leq k$ for every node $v$ of $T$. 
Therefore, we define the following problem that generalizes the {\mkc} problem in chordal graphs.
\newcommand{\inst}{(T, \TT, \vect{k})}

\probl
{Maximum Subtrees with Bounded Load(\mstbl)}
{
A triple $\inst$ where \\
$T$ is a tree, $\TT$ is a set of subtrees of $T$, and
$\vect{k}$ is a vector of non-negative integers indexed by the objects of $T$.
}
{A  maximum cardinality subset $U$ of $\TT$ such that $\vect{\load(U)} \leq \vect{k}$.}

Given an instance $\inst$ of $\mstbl$, a set $U \subseteq \TT$ of subtrees and an object $o$ of $T$, 
we say that $o$ is \emph{overloaded} (resp. \emph{tight}) in $U$ if $\load(U,o) > k_o$ (resp. $\load(U,o) = k_o$).
A set $U \subseteq \TT$ of subtrees of $T$ is \emph{feasible} (for the instance $\inst$) if no object of $T$ is overloaded in $U$.

Throughout this paper, $T$ is a tree with an arbitrary vertex $r$ chosen as its root
and $\TT=\set{T_1, \ldots, T_n}$ is a set of subtrees of $T$.
For every subtree $T_i \in \TT$ of $T$ we denote by $root(T_i)$ its (unique) vertex that is closest to $r$,
and by $leaves(T_i)$ the number of leaves of $T_i$ that are different from $root(T_i)$.
Finally, $M \defined \max_{T_i \in \TT} leaves(T_i)$ is the maximum number of leaves of a subtree, except its root.

An instance of the maximum multi-commodity flow problem, 
consists of graph $G$ with a capacity function $\kappa$ on its edges 
and triples $(s_i, t_i, d_i)$ where $s_i, t_i$ are vertices of $G$ and $d_i$ is a positive \emph{demand} of some commodity to be sent from $s_i$ to $t_i$ through the edges of $G$.
The output is a set of flow functions $f_i$ on the edges of $G$ such that $f_i$ is a flow from $s_i$ to $t_i$ of value $F_i$.
The goal is to maximize $\sum_i F_i$.
When $G$ is a tree, since there is a unique path between $s_i$ and $t_i$ on $G$, we can consider the instance as a set of paths $\set{P_1, P_2, \ldots}$ with a demand $d_i$ associated with every path $P_i$.
For the same reason, the function $f_i$ is equal to $F_i$ in every edge on the path between $s_i$ and $t_i$, and zero elsewhere. 
We can also assume without loss of generality that every $d_i$ is an integer.
Therefore, the maximum multi-commodity flow problem is a subproblem of {\mstbl} where the set $\TT$ is a multiset of paths given in its compact form.
By this equivalence, the hardness results in \cite{GargVY97} imply

\begin{theorem}
{\mstbl} is {\nph} and {\maxsnph} even when $M=2$ and $k_e \in \set{1,2}$ for every edge $e$ of $T$, and $k_v=\infty$ for every vertex $v$ of $T$.
\end{theorem}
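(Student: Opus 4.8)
The plan is to derive the statement directly from the {\nph}ness and {\maxsnph}ness of the maximum integral multi-commodity flow problem in tree networks, proved in \cite{GargVY97}, via the equivalence described above. Recall that a tree multi-commodity flow instance is exactly an instance of {\mstbl} in which $\TT$ is a multiset of paths of $T$ (given in compact form, the multiplicity of a path being the demand of the corresponding commodity), the entries of $\vect{k}$ on the edges are the edge capacities, and $k_v=\infty$ on every vertex $v$, since a flow in a tree constrains edges only. Hence it suffices to check that the hard instances of \cite{GargVY97} land inside the announced subclass of {\mstbl}, namely that they satisfy $M=2$ and $k_e\in\set{1,2}$ for every edge $e$ of $T$.

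First I would establish the bound $M\le 2$ for \emph{every} path instance. Root $T$ at $r$, let $P_i$ be the $s_i$--$t_i$ path carrying commodity $i$, and let $w$ be the least common ancestor of $s_i$ and $t_i$ in $T$. Then $root(P_i)=w$, and since $P_i$ is a path its only possible leaves are its two endpoints $s_i,t_i$; at most two of them differ from $w$, so $leaves(P_i)\le 2$, and therefore $M=\max_i leaves(P_i)\le 2$. Moreover, whenever a commodity has its two endpoints in distinct subtrees hanging off their least common ancestor, the corresponding path has exactly two leaves, so $M=2$; one checks that the construction of \cite{GargVY97} does use such commodities, and hence the {\mstbl} instances it yields have $M=2$. (Alternatively, an instance with $M\le 1$ is solved exactly in polynomial time by the greedy algorithm of Section~\ref{sec:mstbl}, whose approximation ratio is $M$, so the hardness must already appear at $M=2$.)

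What remains is the capacity bound, and this is the one place where some real care is needed: I would trace through the {\maxsnph}ness reduction of \cite{GargVY97} and verify that all edge capacities it creates lie in $\set{1,2}$ (the {\nph}ness reduction there has the same property). Since large demands affect only the size of the instance, keeping $\TT$ in compact form makes the reduction run in polynomial time and leaves the approximation gap unchanged, so {\maxsnph}ness is inherited by {\mstbl} restricted to this subclass. I expect this last verification --- essentially bookkeeping over the gadgets of \cite{GargVY97} --- to be the only nontrivial point; everything else is immediate from the equivalence and the elementary leaf count above.
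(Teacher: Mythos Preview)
Your proposal is correct and follows exactly the paper's approach: the paper itself offers no proof beyond the sentence ``By this equivalence, the hardness results in \cite{GargVY97} imply'' the theorem, so your plan of invoking the path--commodity equivalence, checking $M\le 2$ for path instances, and verifying that the gadgets of \cite{GargVY97} use edge capacities in $\{1,2\}$ is precisely what is (implicitly) intended. If anything, you are being more careful than the paper, which leaves the capacity bookkeeping and the $M=2$ check entirely to the reader.
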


We provide further hardness results in the following theorem.
\begin{theorem}
~\\
\vspace{-4mm}
\begin{enumerate}
    \item $\mstbl$ is not approximable within $\abs{\TT}^{1-\varepsilon}$ for every $\varepsilon > 0$ even when 
    \begin{itemize}
        \item $T$ is a star, and
        \item $k_o=1$ for every object $o$ of $T$ except its center,
    \end{itemize} 
    \item $\mstbl$ is $\apxh$ even when 
    \begin{itemize}
        \item $T$ is a star,
        \item $k_o=1$ for every object $o$ of $T$ except its center, and
        \item $M=3$.
    \end{itemize} 
\end{enumerate}
unless $\nohyp$.

\end{theorem}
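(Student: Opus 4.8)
The plan is to observe that, for the restricted instances in the statement, {\mstbl} is nothing but \textsc{Maximum Set Packing}, and then to transfer off-the-shelf hardness of that problem. Let $T$ be a star with centre $c$ and leaf set $L$, rooted at $c$, and set $k_o=1$ for every leaf and every edge and $k_c=\abs{\TT}$ (so the centre never binds). Every subtree of $T$ on at least two vertices contains $c$ together with the edges joining $c$ to its other vertices; hence such a subtree is determined by the set $S_i\subseteq L$ of leaves it contains, $root(T_i)=c$, and $leaves(T_i)=\abs{S_i}$. For a family of such subtrees, a leaf $\ell$ (equivalently the edge $c\ell$) is overloaded iff $\ell$ lies in at least two of the chosen $S_i$'s, so a subfamily $U\subseteq\TT$ is feasible iff the corresponding sets are pairwise disjoint, and $\abs{U}$ is exactly the number of sets chosen. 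Thus these {\mstbl} instances are exactly \textsc{Maximum Set Packing}, and adding the restriction $M=3$ makes them \textsc{Maximum $3$-Set Packing}.

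For Part~1 I would reduce \textsc{Maximum Independent Set}. Given a graph $G$ on vertices $v_1,\dots,v_n$, take one leaf $\ell_e$ of the star per edge $e\in E(G)$ and one private leaf $\ell_i$ per vertex $v_i$, and let $T_i$ be the sub-star on $c$, $\ell_i$, and all $\ell_e$ with $e$ incident to $v_i$. Then $T_i$ and $T_j$ share a leaf iff $v_iv_j\in E(G)$, so feasible subfamilies correspond exactly to independent sets of $G$; hence the optimum of the {\mstbl} instance equals $\alpha(G)$, while $\abs{\TT}=n$. Since \textsc{Maximum Independent Set} admits no $n^{1-\varepsilon}$-approximation unless $\nohyp$ (Håstad; Zuckerman), and this reduction preserves the objective value with $\abs{\TT}=n$, no $\abs{\TT}^{1-\varepsilon}$-approximation for {\mstbl} exists unless $\nohyp$. (The private leaves $\ell_i$ just make every $T_i$ a genuine sub-star, also taking care of isolated vertices of $G$.)

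For Part~2 I would reduce from \textsc{Maximum $3$-Dimensional Matching}, which is {\apxh} (Kann). From an instance with ground set $X\cup Y\cup Z$ and a set $\mathcal{M}$ of triples, build the star with leaf set $X\cup Y\cup Z$ and, for each triple, the sub-star on $c$ and the three leaves of that triple, with capacities as above. Then the subtrees are pairwise distinct, $M=3$, feasible subfamilies are exactly the matchings contained in $\mathcal{M}$, and the optimum is preserved. Hence this is a strict (in particular, approximation-preserving $L$-) reduction, so {\mstbl} is {\apxh} even when $T$ is a star, $k_o=1$ off the centre, and $M=3$, unless $\nohyp$.

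The reductions themselves are elementary; the points needing care are only that every constructed subtree is a legitimate subtree of the star (handled by the private leaves in Part~1), that the two kinds of capacity constraints — on leaves and on star edges — both collapse to ``the chosen sets are pairwise disjoint'', and that the reductions are objective-preserving so that the ratios transfer verbatim. I expect no real obstacle beyond assembling the right known hardness results; the essential difference from the polynomially solvable ``paths through a common vertex'' case is that an arbitrary subtree of a star may touch many leaves at once, which is exactly what exposes the set-packing structure behind these lower bounds.
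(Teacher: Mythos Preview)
Your argument is correct. For Part~1 it coincides with the paper's proof: both reduce from \textsc{Maximum Independent Set} by building a star with one leaf per edge of $G$ and letting $T_v$ be the sub-star spanning the leaves of the edges incident to $v$; your extra private leaves are a cosmetic addition that takes care of isolated vertices but changes nothing else. For Part~2 you take a slightly different route: the paper reuses the \emph{same} reduction, observing that when $G$ has maximum degree~$3$ (where \textsc{Independent Set} is already {\apxh} by Papadimitriou--Yannakakis) one gets $M=\Delta(G)=3$, whereas you reduce afresh from \textsc{Max 3-Dimensional Matching}. Both are standard approximation-preserving reductions and either yields the claim; the paper's version has the aesthetic advantage of a single unified reduction covering both parts, while yours makes the set-packing structure (and the exact value $M=3$) explicit without needing to exclude isolated or low-degree vertices.
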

\begin{proof}
By approximation preserving reduction from the maximum independent set problem which is known not to be approximable within $n^{1-\varepsilon}$ for any $\varepsilon > 0$ \cite{H99}, where $n$ is the number of vertices of the graph,
and also $\apxh$ for graphs with maximum degree three \cite{PY91}.
Given a graph $G$ on $n$ vertices which is an instance of the maximum independent set problem,
we construct a star $T$, every edge of which corresponds to an edge of $G$.
Now a vertex $v$ of $G$ corresponds to a sub-star $T_v$ of $T$ that consists of the edges corresponding to the edges of $G$ incident to $v$. 
Then $G$ is the edge-intersection graph of the sub-stars $\TT = \set{T_v | v \in V(G)}$.
We note that the degree of a vertex $v$ of $G$ is equal to the number of leaves of the sub-star $T_v$.
A set of vertices of $G$ is an independent set if and only if they are pairwise non-adjacent 
if and only if the corresponding sub-stars of $T$ are pairwise non-edge-intersecting  
if and only if they induce a load of at most 1 on every object except the center of $T$. 
Therefore, $G$ has an independent set of $k$ vertices if and only if the $\TT$ contains $k$ sub-stars that induce load of at most 1 on the leaves of $T$.
We conclude that this reduction preserves approximation ratio.
\end{proof} 

\comment{We mentioned the following two problems too
\probl
{Problem 1}
{$(T_i, d_i)$}
{$x_i \in [0..1]$ s.t. $\sum x_i d_i$ is maximum}
}

\comment{
\probl
{Problem 2}
{$(T_i, d_i)$}
{$x_i \in [0..1]$ s.t. $\sum x_i$ is maximum}
}

\section{Maximum Number of Subtrees with Bounded Load}\label{sec:mstbl}
In this section, we present a greedy algorithm for the $\mstbl$ problem and prove that it is an $M$-approximation.
This implies a) a $2$-approximation whenever the subtrees are paths, and 
b) optimality whenever the tree is directed according to some root and all the paths follow the direction.
We then consider the special case of paths sharing a common vertex and provide an optimal algorithm for this case, 
thus generalizing a result presented in \cite{GargVY97}.

\newcommand{\alg}{\textsc{BottomUpGreedy}}
\newcommand{\greedy}{\textsc{BottomUpGreedy}}

\alglanguage{pseudocode}

\begin{algorithm}[htbp]
\caption{$\alg$}\label{alg:BottomUpGreedy}
\begin{algorithmic}[1]
\Require {An instance $\inst$ of $\mstbl$}
\Ensure{Return a subset $U$ of $\TT$ such that $\vect{\load(U)} \leq \vect{k}$.}
\Statex
\State $U \gets \emptyset$. 
\For{every vertex $v$ in some post-order traversal of $T$}
    \For {every subtree $T_i \in \TT$ such that $root(T_i) = v$} 
        \If {$U + T_i$ is feasible}
            \State $U \gets U + T_i$.
        \EndIf
    \EndFor
\EndFor
\State \Return $U$.

\end{algorithmic}
\end{algorithm}

\newcommand{\sol}{ALG}

Assume without loss of generality, that the subtrees $\TT$ are numbered in the order they are visited by $\alg$,
i.e., in some bottom-up order of their roots in $T$,   
and let $\TT_i$ be the first $i$ subtrees of $\TT$ in this order.
Let $\opt$ be some optimal solution of {\mstbl}, i.e. a set of subtrees from $\TT$ of maximum size such that $\vect{\load(\opt)} \leq \vect{k}$.
Let $\sol$ denote a set of subtrees returned by the algorithm on input $\inst$, and let $\sol_i = \sol \cap \TT_i$.

\begin{lemma}\label{lem:SetsS}
There exists a sequence $S_0, S_1, \ldots, S_n$ of feasible subsets of $\TT$ such that for every $i \in [0,n]$ we have
\begin{enumerate}
\item $S_i \sm \TT_i \subseteq \opt$,
\item $\abs{\opt} - \abs{S_i} \leq (M-1) \abs{\sol_i}$, and
\item $S_i \cap \TT_i = \sol_i$.
\end{enumerate}
Note that the first and last conditions above, together imply that $S_i \setminus \sol_i$ is contained in $\opt$.
\end{lemma}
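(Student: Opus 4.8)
The plan is to prove the lemma by induction on $i$, obtaining $S_i$ from $S_{i-1}$ by accounting for the single new subtree $T_i$. For the base case take $S_0:=\opt$, which works because $\TT_0=\emptyset$ and $\sol_0=\emptyset$. Throughout I use the standard fact that, when \alg{} reaches $T_i$, its current partial solution is exactly $\sol_{i-1}$; hence $T_i\in\sol$ (i.e.\ $\sol_i=\sol_{i-1}\cup\{T_i\}$) if and only if $\sol_{i-1}+T_i$ is feasible, and $T_i\notin\sol$ (i.e.\ $\sol_i=\sol_{i-1}$) otherwise. I also assume $M\ge1$: if $M=0$ every subtree is a single vertex, \alg{} is already optimal, and the statement holds trivially. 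Finally note that $T_i\notin\TT_{i-1}$, so by condition~3 for $S_{i-1}$ we have $\sol_{i-1}=S_{i-1}\cap\TT_{i-1}\subseteq S_{i-1}$, and every member of $S_{i-1}$ of index $\ge i$ lies in $S_{i-1}\setminus\TT_{i-1}$, hence in $\opt$ by condition~1.

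The inductive step splits into three cases. \emph{Case 1: $T_i\notin\sol$.} I first claim $T_i\notin S_{i-1}$; otherwise $T_i\in S_{i-1}\setminus\sol_{i-1}$, and since $\sol_{i-1}+T_i$ is infeasible some object $o\in T_i$ has $\load(\sol_{i-1},o)\ge k_o$, so (as $\sol_{i-1}\cup\{T_i\}\subseteq S_{i-1}$) $\load(S_{i-1},o)\ge k_o+1$, contradicting feasibility of $S_{i-1}$. Thus I may set $S_i:=S_{i-1}$; then $S_i\cap\TT_i=(S_{i-1}\cap\TT_{i-1})\cup(S_{i-1}\cap\{T_i\})=\sol_{i-1}=\sol_i$, while conditions~1 and~2 are immediate since $\abs{S_i}=\abs{S_{i-1}}$ and $\abs{\sol_i}\ge\abs{\sol_{i-1}}$. \emph{Case 2: $T_i\in\sol$ and $T_i\in S_{i-1}$.} Again set $S_i:=S_{i-1}$; now $S_i\cap\TT_i=\sol_{i-1}\cup\{T_i\}=\sol_i$, and conditions~1 and~2 hold as before.

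\emph{Case 3: $T_i\in\sol$ and $T_i\notin S_{i-1}$.} Here I must delete subtrees from $S_{i-1}$ so that $S_{i-1}+T_i$ becomes feasible, while losing at most $M-1$ against $\opt$. Put $v:=root(T_i)$ and recall that, $v$ being the vertex of $T_i$ closest to $r$, the whole of $T_i$ lies in the subtree of $T$ rooted at $v$. The key topological observation is: \emph{every $T_j$ with $j>i$ that intersects $T_i$ contains $v$.} Indeed, since the subtrees are numbered along a post-order of their roots and $j>i$, $root(T_j)$ is not a proper descendant of $v$ (such a vertex precedes $v$ in post-order); as $T_j$ meets $T_i$, $root(T_j)$ is not incomparable to $v$ either, so it equals $v$ or is a proper ancestor of $v$, and in the latter case connectivity of $T_j$ forces $v\in T_j$, since any path from $root(T_j)$ to a vertex of $T_i$ passes through $v$. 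Now let $O$ be the set of objects of $T_i$ tight in $S_{i-1}$ — precisely the objects that would be overloaded by adding $T_i$ to $S_{i-1}$. For each $o\in O$, feasibility of $\sol_{i-1}+T_i$ gives $\load(\sol_{i-1},o)<k_o=\load(S_{i-1},o)$, so some subtree of $S_{i-1}$ of index $>i$ contains $o$, and by the observation it also contains $v$. Enumerate the leaves of $T_i$ other than $v$; there are at most $M$ of them, and the corresponding paths from $v$ in $T_i$ (equivalently in $T$) cover all objects of $T_i$. For every such path $P$ that meets $O$, let $o_P$ be the element of $O$ on $P$ farthest from $v$ and choose $R_P\in S_{i-1}$ of index $>i$ with $o_P\in R_P$; since $R_P$ also contains $v$, it contains the $v$-to-$o_P$ portion of $P$, hence every element of $O$ on $P$. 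Let $R:=\{R_P: P\text{ meets }O\}$: a set of at most $M$ subtrees, all of index $>i$ (so $R\subseteq\opt$ and $R\cap\TT_i=\emptyset$), covering $O$. Put $S_i:=(S_{i-1}\setminus R)\cup\{T_i\}$. Feasibility is checked object by object: objects outside $T_i$ only lose load; objects of $T_i$ outside $O$ had a unit of slack in $S_{i-1}$; objects of $O$ lose at least one covering subtree of $R$. Condition~3 holds since $S_i\cap\TT_i=(S_{i-1}\cap\TT_{i-1})\cup\{T_i\}=\sol_i$, condition~1 since $S_i\setminus\TT_i\subseteq S_{i-1}\setminus\TT_{i-1}\subseteq\opt$, and condition~2 since $\abs{\opt}-\abs{S_i}=(\abs{\opt}-\abs{S_{i-1}})+\abs{R}-1\le(M-1)\abs{\sol_{i-1}}+(M-1)=(M-1)\abs{\sol_i}$.

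I expect the whole difficulty to sit in Case~3 — namely, arguing that deleting at most $M$ subtrees already restores feasibility. The two ingredients that make this go through are the topological observation (all ``interfering'' future subtrees pass through $root(T_i)$, so along each branch of $T_i$ they are nested in the sense that a subtree reaching the deepest tight object of a branch covers all tight objects of that branch) and the bound $M$ on the number of $v$-to-leaf branches of $T_i$. A few routine matters still need attention: ``object'' ranges over both vertices and edges, so ``farthest element of $O$ on $P$'' and ``$v$-to-$o_P$ portion of $P$'' must be read in the natural linear order of the vertices and edges along $P$; and the degenerate case $T_i=\{v\}$, where there are no branches, should be handled directly (then $R$ is empty or a single subtree through $v$).
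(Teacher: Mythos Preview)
Your proof is correct and follows essentially the same approach as the paper: an inductive construction starting from $S_0=\opt$, with the same case split and the same mechanism in the hard case (pick, for each root-to-leaf branch of $T_i$, one ``future'' subtree of $S_{i-1}$ through the deepest tight object on that branch; since such a subtree must pass through $root(T_i)$, it covers all tight objects on that branch). Your write-up differs only cosmetically---you fold the ``$S_{i-1}+T_i$ feasible'' subcase into the general Case~3 via $O=\emptyset$, and you spell out the topological observation and the coverage argument more explicitly than the paper does (which actually helps, since the paper's sentence ``$\TT'$ hits all the overloaded objects $\oo$'' is a slip for $O$, clarified only in the figure caption).
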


\begin{proof}
We define $S_i$ inductively, and prove by induction on $i$.
We start with $S_0 = \opt$. 
Clearly,  $S_0$ is feasible, and it is easy to verify that it satisfies all three conditions. 
For $i \in [n]$ we proceed as by dividing into cases.

\begin{itemize}
\item{$T_i \notin \sol$ and $T_i \notin S_{i-1}$ or $T_i \in \sol$ and $T_i \in S_{i-1}$:} in this case we define $S_i = S_{i-1}$ which is feasible by the inductive hypothesis.
Clearly, $S_{i-1} \cap \set{T_i} = \sol \cap \set{T_i}$ in both cases.
We now verify that all the conditions are satisfied by $S_i$.
\begin{enumerate}
\item $S_i \sm \TT_i = S_{i-1} \sm \TT_i = S_{i-1} \sm \TT_{i-1} - T_i \subseteq  S_{i-1} \sm \TT_{i-1} \subseteq \opt$.
\item $\abs{\opt} - \abs{S_i} = \abs{\opt} - \abs{S_{i-1}} \leq (M-1) \abs{\sol_{i-1}} \leq (M-1) \abs{\sol_i}.$
\item $\begin{array}{lcl}
S_i \cap \TT_i & =  & S_{i-1} \cap \TT_i = S_{i-1} \cap (\TT_{i-1} + T_i) =  (S_{i-1} \cap \TT_{i-1}) \cup (S_{i-1} \cap \set{T_i})\\
& = & ALG_{i-1} \cup (\sol \cap \set{T_i}) = \sol_i
\end{array}$.
\end{enumerate}

\item{$T_i \notin \sol$ and $T_i \in S_{i-1}$:} in this case we have
\[
S_{i-1} = (S_{i-1} \cap \TT_{i-1}) \cup (S_{i-1} \sm \TT_{i-1}) \supseteq (S_{i-1} \cap \TT_{i-1}) + T_i = \sol_{i-1} + T_i.
\]
Since, by the inductive hypothesis, $S_{i-1}$ is feasible, we conclude that $\sol_{i-1} + T_i$ is also feasible.
Then, by the greedy behaviour of the algorithm, we have $T_i \in \sol$, a contradiction.

\item{$T_i \in \sol$ and $T_i \notin S_{i-1}$:} We further divide this case into two subcases.
\begin{itemize}
\item{$S_{i-1} + T_i$ is feasible:}
in this case we define $S_i = S_{i-1} + T_i$ and verify that all conditions are satisfied.
\begin{enumerate}
\item $S_i \setminus \TT_i = (S_{i-1} + T_i) \setminus \TT_i = S_{i-1} \setminus \TT_{i-1} \subseteq \opt$.
\item $\abs{\opt} - \abs{S_i} \leq \abs{\opt} - \abs{S_{i-1}} \leq (M-1) \abs{\sol_{i-1}} < (M-1) \abs{\sol_i}$.
\item $S_i \cap \TT_i = (S_{i-1} + T_i) \cap (\TT_{i-1} + T_i) = (S_{i-1} \cap \TT_{i-1}) + T_i =  \sol_{i-1} + T_i = \sol_i$.
\end{enumerate}

\item{$S_{i-1} + T_i$ is not feasible.}
Consult Figure \ref{fig:TreeIntersection} for the following discussion.
Since $S_{i-1}$ is feasible, but $S_{i-1} + T_i$ is not,
$T$ contains a non-empty set of objects that are tight in $S_{i-1}$, but overloaded in $S_{i-1} + T_i$. 
Let $O = \set{ o \in V(T) \cup E(T)|~\load(S_{i-1} + T_i,o) = k_o + 1}$ be the set of these objects.
Clearly, $O$ is contained in $T_i$.
Consider the $leaves(T_i)$ paths connecting $root(T_i)$ to the leaves of $T_i$.
Let $\oo$ be the subset of objects in $O$ each of which is the last object in $O$ (i.e. farthest from $root(T_i)$) in one of these root-to-leaf paths.
Clearly, $\abs{\oo} \leq leaves(T_i) \leq M$.

\begin{figure}
\begin{center}
\includegraphics{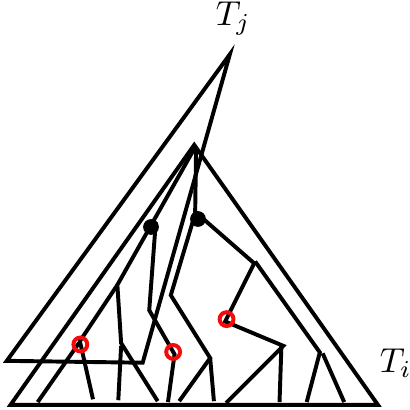}
\caption{
The intersection of two subtrees $T_i, T_j \in \TT$ with $i \leq j$. 
The set of overloaded objects in the example consists of five vertices (i.e. $\abs{O}=5$) emphasized with circles. 
The objects drawn with red circles are the elements of $\OO$ which consists of the lowest elements of $O$ (i.e. closest to the leaves of $T$). 
The tree $T_j \in \TT'$ contains one such element, and all its ancestors (including those in $O$). 
In this way the set $\TT'$ contains at most $\abs{\oo}$ subtrees and covers all the elements of $O$.}\label{fig:TreeIntersection}
\end{center}
\end{figure}

We have $\sol_i = \sol_{i-1} + T_i = (S_{i-1} \cap \TT_{i-1}) + T_i$, and this set is feasible.
Recall that $S_{i-1} + T_i$ is infeasible.
Therefore, every object of $\oo$, is contained in at least one subtree of $S_{i-1} \sm \TT_i$.
We construct a set $\TT'$ of at most $\abs{\oo} \leq M$ subtrees by picking one subtree from $S_{i-1} \sm \TT_i$ that contains $o$ for every object $o \in \oo$.
The order the algorithm visits the subtrees implies
that $root(T_j)$ is a (not necessarily proper) ancestor of $root(T_i)$ for every $T_j \in \TT'$.
Therefore, $\TT'$ hits all the overloaded objects $\oo$ (of $S_{i-1} + T_i$).
Then no object of $T$ is overloaded in $(S_{i-1} + T_i) \sm \TT'$. 
Therefore, $(S_{i-1} + T_i) \sm \TT'$ is feasible.
We define $S_i = (S_{i-1} + T_i) \sm \TT'$ and verify that the conditions are satisfied.
\begin{enumerate}
\item $S_i \sm \TT_i \subseteq (S_{i-1} + T_i) \sm \TT_i = S_{i-1} \sm \TT_{i-1} \subseteq \opt$.
\item $\abs{\opt} - \abs{S_i} \leq \abs{\opt} - (\abs{S_{i-1}} + 1 - \abs{\TT'}) = \abs{\opt} - \abs{S_{i-1}} + \abs{\TT'} - 1 \leq (M-1) \abs{\sol_{i-1}} + M - 1 = (M-1) \abs{\sol_i}$.
\item $S_i \cap \TT_i = (S_{i-1} + T_i \setminus \TT') \cap \TT_i = (S_{i-1} \cap \TT_{i-1}) + T_i = \sol_{i-1} + T_i = \sol_i$ where the second equality holds because $\TT' \cap \TT_i = \emptyset$.
\end{enumerate}
\end{itemize}
\end{itemize}
\end{proof}

\begin{theorem}\label{thm:MApprox}
The approximation ratio of $\alg$ is exactly $M$.
\end{theorem}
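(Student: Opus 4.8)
The plan is to prove the bound $\abs{\opt}\le M\cdot\abs{\sol}$ (so that $\alg$ is an $M$-approximation) and then exhibit a family of instances on which this ratio is attained, so that the ratio is \emph{exactly} $M$.

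For the upper bound I would just feed Lemma~\ref{lem:SetsS} its extreme case $i=n$. Since $\TT_n=\TT$, the set $S_n$ is automatically contained in $\TT_n$, so the third property of the lemma gives $S_n=S_n\cap\TT_n=\sol_n=\sol$; the second property then reads $\abs{\opt}-\abs{\sol}\le (M-1)\,\abs{\sol}$, i.e. $\abs{\opt}\le M\,\abs{\sol}$. That is essentially the whole argument — all the real work has already been encapsulated in Lemma~\ref{lem:SetsS}, so I expect no obstacle here beyond the trivial observation that $M\ge 1$ except in the degenerate case where every subtree is a single vertex, which the greedy solves optimally anyway.

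For tightness I would build, for every $M\ge 1$, an instance forcing a loss of a factor $M$. Let $T$ be rooted at $r$, let $r$ have a single child $c$, and let $c$ have $M$ further children $a_1,\dots,a_M$, each a leaf of $T$. Take $\TT=\{T_0,T_1,\dots,T_M\}$, where $T_0$ is the star with $root(T_0)=c$ and leaves $a_1,\dots,a_M$ (so $leaves(T_0)=M$), and for each $j\in[M]$ the subtree $T_j$ is the path from $r$ through $c$ to $a_j$, with $root(T_j)=r$ (so $leaves(T_j)=1$); hence $M$ is indeed the maximum number of leaves. Set $k_o=1$ on every leaf $a_j$ and every edge $ca_j$, and $k_o=M$ on $c$, on $r$, and on the edge $rc$. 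Because $c$ is a proper descendant of $r$, every post-order traversal processes $c$ (hence $T_0$) before $r$ (hence before all the $T_j$'s); the greedy therefore adds $T_0$, and afterwards every $T_j$ is rejected since it would push the load of $a_j$ and of $ca_j$ up to $2$. Thus $\sol=\{T_0\}$, while $\{T_1,\dots,T_M\}$ is feasible (load $1$ on each leaf object, load $M$ on $c$, $r$, $rc$) and no larger feasible set exists, giving $\abs{\opt}=M$. Combined with the upper bound, the ratio of $\alg$ is exactly $M$.

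The one place that needs a little thought is making this lower-bound instance robust to how the algorithm breaks ties: I avoid any dependence on the (unspecified) inner ordering by rooting the ``blocking'' subtree $T_0$ at a strict descendant of the common root $r$ of the $M$ ``useful'' subtrees, so that $T_0$ is guaranteed to be visited first in \emph{any} post-order traversal. Everything else is a routine feasibility check.
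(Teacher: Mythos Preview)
Your upper-bound argument is identical to the paper's: apply Lemma~\ref{lem:SetsS} at $i=n$, use condition~3 to get $S_n=\sol$, and read off $\abs{\opt}\le M\abs{\sol}$ from condition~2.

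Your tightness construction is correct but differs from the paper's in one respect worth noting. The paper uses a plain star with center $0$, sets $k_0=M$ and $k_o=1$ elsewhere, and takes $\TT$ to be the whole star together with the $M$ center-to-leaf edges. Since every subtree is then rooted at the center, the paper's bad run depends on the (unspecified) inner ordering happening to pick the full star first; the theorem is thus about the \emph{worst-case} behaviour of $\alg$ over all legal orderings. Your construction inserts an extra vertex $r$ above the center $c$ and roots the $M$ ``good'' paths at $r$ while the blocking star stays rooted at $c$; this forces the blocking star to be processed first in \emph{every} post-order traversal, so the ratio $M$ is attained regardless of how ties are broken. Both arguments are valid, but yours is slightly stronger in that it shows the bound is tight even for the most favourable tie-breaking rule, at the cost of a marginally larger host tree.
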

\begin{proof}
Let $S_0, \ldots, S_n$ be the sets whose existences are guaranteed by Lemma \ref{lem:SetsS}.
Then, $S_n = S_n \cap \TT_n = \sol_n = \sol$, and we conclude
\[
\abs{\opt} - \abs{\sol} = \abs{\opt} - \abs{S_n} \leq (M-1) \abs{\sol_n} = (M-1) \abs{\sol}. \]
It remains to show that the approximation ratio of $\alg$ is at least $M$.
Consider the instance $\inst$ of $\mstbl$ where $T$ is a star with center $0$ and $M$ leaves $[M]$,
$\TT=\set{P_{0,j} \vert j \in [M]} \cup \set{T}$ where $P_{i,j}$ denotes the unique path between $i$ and $j$ in $T$.
Let also $k_0 = M$ and $k_o=1$ for every other object $o$.
$\alg$ processes the paths of $\TT$ in an arbitrary order since $root(T_i)=0$ for every $T_i \in \TT$. 
The algorithm may consider first the star $T$ and return $\set{T}$ as a solution. 
On the other hand, $\set{P_{0,j} \vert j \in [M]} \cup \set{T}$ is a feasible solution with $M$ subtrees.
\end{proof}

We now note that our algorithm, as described, is polynomial when $\TT$ is a set, but pseudo polynomial if $\TT$ is a multiset given in compact form.
We now present a simple modification to Algorithm {\greedy} so that it runs in polynomial time also in the latter case.
Suppose that each subtree $T_i \in \TT$ has multiplicity $d_i$.
Note that when a tree $T_i$ is considered, in order to decide whether $U+T_i$ is feasible we have to compute the minimum remaining capacity of the objects in $T_i$.
Let $c$ be this capacity. 
Then $T_i$ is be added to $U$ with multiplicity $\min \set{d_i, c}$.
After this modification, ${\greedy}$ considers every tree in $\hat{\TT}$ exactly once, regardless from its multiplicity in $\TT$.
Therefore, we have

\begin{theorem}
There is an $M$-approximation greedy algorithm for $\mstbl$ that runs in polynomial time also when the set $\TT$ is given in compact form.
\end{theorem}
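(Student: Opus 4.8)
The plan is to show that the modified version of \greedy\ sketched above produces, up to the order in which copies of a given subtree are examined, exactly the same output multiset $U$ as the unmodified algorithm would produce when run on the expanded multiset $\TT$ with every copy listed explicitly. The $M$-approximation guarantee then follows verbatim from Theorem~\ref{thm:MApprox}, whose proof (via Lemma~\ref{lem:SetsS}) fixes an arbitrary bottom-up order of the roots at the outset and is insensitive to how subtrees sharing a common root are ordered among themselves. The polynomial running time then follows from a direct count.

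First I would fix a post-order traversal of $T$ and, at each vertex $v$, order the subtrees rooted at $v$ so that all $d_i$ copies of each distinct subtree $T_i \in \hat{\TT}$ with $root(T_i)=v$ form a consecutive block; this is a legitimate processing order for the unmodified algorithm, since its only requirement is that roots be visited bottom-up and all copies of $T_i$ share the root $root(T_i)$. When the unmodified algorithm reaches the block of $d_i$ copies of $T_i$, the current solution $U$ already induces some load on the objects of $T_i$; let $c \defined \min\set{k_o - \load(U,o) : o \text{ an object of } T_i}$ be the minimum remaining capacity along $T_i$. Adding one copy of $T_i$ is feasible iff $c \geq 1$, and each successful addition decreases every remaining capacity along $T_i$ by exactly one, hence decreases $c$ by exactly one; consequently the algorithm inserts exactly $\min\set{d_i,c}$ copies of $T_i$ and then moves on to the next block. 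This is precisely what the modified algorithm does in a single step. Thus the two runs maintain identical load vectors throughout and return the same multiset $U$, so $\vect{\load(U)} \leq \vect{k}$, and the size of $U$ counted with multiplicity equals the size of the set returned by the unmodified algorithm on the expanded instance; Theorem~\ref{thm:MApprox} then gives $\abs{\opt} - \abs{U} \leq (M-1)\abs{U}$, i.e. an $M$-approximation.

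For the running time, the modified algorithm examines each distinct subtree of $\hat{\TT}$ exactly once, regardless of its multiplicity. Processing $T_i$ amounts to computing $c$ and, after the batch insertion, updating the stored loads on the objects of $T_i$, at cost $O(\abs{V(T_i)} + \abs{E(T_i)}) = O(\abs{V(T)})$. Since $\hat{\TT}$ and the multiplicities $d_i$ are given as part of the compact input, the total work is polynomial in the input size, and the returned solution $U$ is itself a multiset of polynomial size, naturally produced in compact form. When $\TT$ is an ordinary set (all $d_i = 1$) the modified algorithm coincides with \greedy, so the single algorithm handles both cases.

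The only point that needs care is the equivalence argument of the second paragraph: one must verify that grouping the copies of each subtree into consecutive blocks does not violate the bottom-up requirement (it does not, as noted), and that Lemma~\ref{lem:SetsS} and Theorem~\ref{thm:MApprox} apply to this particular order — which is immediate, since those proofs assume nothing beyond a bottom-up visiting order of the roots. Everything else is routine bookkeeping of remaining capacities.
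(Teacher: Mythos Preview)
Your proposal is correct and takes essentially the same approach as the paper. The paper's treatment is terser---it simply describes the batch-insertion modification (compute the minimum remaining capacity $c$ along $T_i$, add $\min\set{d_i,c}$ copies) and asserts the theorem---while you supply the explicit equivalence argument between the modified algorithm and the unmodified one run on the expanded multiset with copies grouped consecutively, which is precisely the justification the paper leaves implicit.
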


As remarked in \cite{GargVY97}, when $M=2$ and $T$ is a star, the problem is equivalent to the $b$-matching problem, thus polynomial-time solvable.
In the sequel we generalize this result.

\begin{theorem}\label{thm:HieararchicalBMatching}
{\mstbl} can be solved in polynomial time when $M=2$ and $T$ contains a vertex $r$ such that when $T$ is rooted from $r$, every path in $\TT$ is either directed or contains $r$.
\end{theorem}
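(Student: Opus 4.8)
The plan is to reduce, in polynomial time, to an instance of the \emph{hierarchical $b$-matching} problem ($b$-matching subject to a laminar family of capacity constraints on the edges), thereby generalizing the star-to-$b$-matching reduction of~\cite{GargVY97}. Since $M=2$, every member of $\TT$ is a path, so I root $T$ at $r$ and classify each $P\in\TT$ as either \emph{internal} --- a monotone (ancestor-to-descendant) path contained in a single subtree $T^{(\ell)}$ hanging below a child $r_\ell$ of $r$ --- or \emph{bent} --- a path whose topmost vertex is $r$, i.e.\ the concatenation of at most two ``arms'', each a monotone path emanating from some $r_\ell$ (a path ending at $r$ is a bent path with one empty arm; the trivial path $\{r\}$ is immaterial). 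I then build an auxiliary graph $H$ whose vertices are the endpoints in $T$ of the paths of $\TT$, with one $H$-edge per path of $\TT$ joining its two endpoints. The crucial observation is that, for an object $o$ of $T$, the paths of $\TT$ containing $o$ are exactly the $H$-edges whose associated $T$-path crosses $o$; for $o$ an edge $pc$ of $T$ (with $c$ the child) this is precisely the set of $H$-edges with exactly one endpoint inside the subtree $T_c$ rooted at $c$, and the subtrees $T_c$ over all edges of $T$ form a laminar family. By the monotonicity hypothesis the only vertex at which paths bend is $r$, so the vertex constraint at every $v\neq r$ likewise reduces to a cut/degree constraint of this laminar system, while the constraint at $r$ together with the edges incident to $r$ plays the role of the hub of a $b$-matching instance. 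Hence a subset $U\subseteq\TT$ is feasible for $\inst$ iff the corresponding edge set is a feasible hierarchical $b$-matching, and maximizing $\abs{U}$ is exactly maximizing that $b$-matching.

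The second step is to solve hierarchical $b$-matching in polynomial time. For the special case in which $T$ is a star the laminar system is trivial and this is ordinary $b$-matching, solved in~\cite{GargVY97}; in general one has non-bipartite $b$-matching (the ``graph on the pendant subtrees'' can contain odd cycles, which is why plain flow/LP does not suffice) together with the laminar cut bounds coming from the tree structure of $T$. I would obtain a polynomial algorithm either by invoking known solvability of this generalization, or by a bottom-up contraction: process $T^{(1)},\dots,T^{(m)}$ one at a time and replace each by an equivalent bounded gadget attached at $r$, using that inside a pendant subtree \emph{all} paths are monotone, so the object-versus-path incidence matrix there is an interval-type (network) matrix, hence totally unimodular, and the trade-off between the number of internal paths packed and the capacities reserved along the laminar chains for incoming arms is therefore well enough behaved (concave along those chains) to be realized by a small gadget; this is also precisely the place where Theorem~\ref{thm:MApprox}, in its optimal ``all paths follow the direction'' regime, is used as a black box for the internal paths. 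Finally the self-load bound $k_r$ on the total number of selected bent paths is a single cardinality constraint on a distinguished subset of the $H$-edges and is incorporated by a standard gadget (or by a parametric/Lagrangian call to $b$-matching), and path multiplicities given in compact form are handled exactly as for \greedy{}, by saturating an $H$-edge up to its residual capacity in one step; neither affects polynomiality.

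I expect the main obstacle to be the second step: making the reduction from ``a pendant subtree plus its interface'' to a bounded $b$-matching gadget precise and proving it loses nothing, i.e.\ that the influence of each pendant subtree on the rest of the instance is captured exactly by a polynomial-size gadget. The first step --- the correspondence with $H$ and the verification that, under the hypothesis, every object constraint other than the one at $r$ is a laminar cut constraint --- is routine, and it is exactly here that the assumption ``$M=2$ and every path is monotone or passes through $r$'' is used: it confines all non-laminar (matching-type) interaction to the single hub $r$, and it makes the interaction \emph{inside} each pendant subtree totally unimodular, so that the only genuinely combinatorial ingredient left is ordinary non-bipartite $b$-matching at the hub.
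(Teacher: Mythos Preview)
Your destination---hierarchical $b$-matching over the laminar family of subtrees $\{T_c\}$---is exactly the paper's, and your treatment of the \emph{edge} constraints as cut constraints is correct. The gap is in the sentence ``the vertex constraint at every $v\neq r$ likewise reduces to a cut/degree constraint of this laminar system.'' That is false once directed paths are present: a monotone path whose \emph{top} endpoint is $v$ contains $v$ but has both of its $H$-endpoints inside $T_v$, so it does not cross the cut $(T_v,\overline{T_v})$. Thus the load constraint at $v$ is \emph{not} a laminar cut constraint in your encoding, and this is precisely what pushes you into the gadget/TU construction you yourself flag as the main obstacle. (Your proposed fix---replace each pendant subtree by an ``equivalent bounded gadget'' capturing the trade-off between internal paths and reserved arm capacity---is not obviously polynomial: the interface of a pendant subtree is not a single number but a set of arm profiles indexed by leaves, and concavity along individual chains does not by itself bound the gadget size.)

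The paper avoids this difficulty by a two-phase decomposition that you do not use. It runs {\greedy} with the paths ordered so that every directed path precedes every path through $r$, and applies Lemma~\ref{lem:SetsS} to the directed prefix: since each directed path has at most one leaf below its root, the lemma (with the per-step bound $leaves(T_j)\le 1$ in place of the global $M$) gives $\abs{S_i}=\abs{\opt}$ and $S_i\cap\TT_i=\sol_i$, i.e.\ the greedy choice on the directed paths extends to a globally optimal solution. Only after deleting those paths and decrementing the capacities does the paper reduce the residual instance---now consisting solely of paths that contain $r$ as an internal vertex---to hierarchical $b$-matching. At that point every selected path through a vertex $v\neq r$ genuinely uses the edge $e_v$, so the vertex constraint at $v$ coincides with the edge constraint at $e_v$ and the laminar description is exact with no gadgets needed. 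In short: you try to collapse both phases into a single reduction, which creates the vertex-constraint mismatch; the paper's order---greedy on monotone paths first (this is where Lemma~\ref{lem:SetsS}, not Theorem~\ref{thm:MApprox}, is the right tool), then hierarchical $b$-matching on what remains---makes the step you were worried about disappear.
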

\begin{proof}
We first show that under our assumption, {\mstbl} can be solved in polynomial time if and only if {\mstbl} can be solved in polynomial time for instances where $M=2$ and 
there is a vertex $r$ of $T$ contained as internal vertex by all the paths.
The only if part being trivial, we show the if part.

Consider a variant of {\greedy} where at every vertex $v \in V(T)$ the subtrees rooted at $v$ are processed in the order of their number of leaves except $v$. 
Consider an execution of this variant with some postorder traversal of $T$ ending at the root $r$,
and let $T_i$ be the last subtree rooted at $r$ that has at most one leaf except $r$.
Then $\TT_i$ (resp. $\TT \setminus \TT_i$) is the set of paths that do not (resp. do) contain $r$ as an internal vertex.
By our assumption, for every tree $T_j \in \TT_i$ the number of leaves of $T_j$ except its root is at most 1.
Then Lemma \ref{lem:SetsS} holds for every $j \leq i$ (and in particular for $i$) with $M=1$.
Namely, we have $S_i \cap \TT_i = \alg_i$ and $\abs{S_i}=\abs{\opt}$.
In other words, $S_i$ is an optimal solution that contains $\alg_i$.
Adding to $\alg_i$ and optimal solution of to solve the instance obtained by the removal of $\alg_i$ we get an optimal solution. 

We now show that instances with $M=2$ and all the paths having a vertex $r$ in common can be solved in polynomial time.
Without loss of generality we can assume that the endpoints of the paths of $\TT$ are leaves of $T$.
Indeed, if this is not the case for some endpoint $u$ of some path of $\TT$ one can add to $T$ a new leaf $v$ adjacent to $u$ and add the edge $uv$ to $P$.
Finally, setting the capacities of the new objects $v$ and $uv$ to infinity we get an equivalent instance.
For a vertex $v$ of $T$, and let $\cl_v$ be the set of leaves of $\TT$ that are descendants of $v$, and if $v \neq r$ let $e_v$ be the unique edge between $v$ and its parent.
Let $G$ be the multi-graph with vertex set $\cl_r$ (i.e. the set of leaves of $T$) such that $uv$ is an edge of $G$ if and only if there is a path between $u$ and $v$ in $\TT$.
Clearly, a subset $U$ of paths of $\TT$ corresponds to a set of edges of $G$, 
in other words, to a subgraph $H$ of $G$.
Consider a subgraph $H$ of $G$ that corresponds to a feasible solution $U$.
The degree $d_H(v)$ of a vertex $v$ of $H$, is the number of paths of $U$ with an endpoint in the leaf $v$ of $T$.
Since $U$ is feasible, we have $d_H(v) \leq \min{k_v, k_{e_v}}$.
Now let $v$ be any vertex of $T$.
A path of $\TT$ has an endpoint in $\cl_v$ if and only if it contains $v$ as an internal vertex. 
Furthermore, if $v \neq r$ we have that a path of $\TT$ has an endpoint in $\cl_v$ if and only if it contains $e_v$.
Then $H$ is a subgraph of $G$ corresponding to a feasible solution of our instance if and only if
\begin{equation}\label{eqn:degreeBounds}
\sum_{\ell \in \cl_v} d_H(\ell) \leq b_v
\end{equation}
where
\[
b_v = \left\{
\begin{array}{ll}
    k_v & \textrm{if~} v=r \\
    \min \set{k_v, k_{e_v}} & \textrm{otherwise}.
\end{array}
\right.
\]
We note that for every pair $u,v$ of vertices of $T$ either the sets $\cl_u$ and $\cl_v$ are either disjoint or one is contained in the other.
Such a set system is termed \emph{laminar}.
Then our objective becomes to find a subgraph $H$ of $G$ with maximum number of edges subject to the degree constraints \eqref{eqn:degreeBounds} where the set system $\set{\cl_v: v \in V(T)}$ is laminar.
This is an extension of the classical $b$-matching problem and it can be solved in polynomial time. (see \cite{EKSZ19-HierarchicalBMatching-Arxiv}).
\footnote{A pseudo-polynomial solution for the weighted case is provided in \cite{KaparisL2014OnLaminarMatroidsAndBMatchings}}. 
\end{proof}

The algorithm implied by the proof of Theorem \ref{thm:HieararchicalBMatching} is provided in Algorithm \ref{alg:VPS}. 
We finally note that this algorithm too, though pseudo-polynomial as described, 
can easily be modified to run in polynomial-time.

\alglanguage{pseudocode}
\begin{algorithm}
\caption{Algorithm using Hierarchical $b$-matching}\label{alg:VPS}
\begin{algorithmic}[1]
\Require {An instance $\inst$ of $\mstbl$}
\Require {$\overrightarrow{T}$ is a rooted tree obtained from $T$ with root $r$}
\Require {$\TT$ is a set of paths}
\Require {If $P \in \TT$ does not contain $r$ as an internal vertex, it is a directed path of $\overrightarrow{T}$}.
\Ensure {Return a subset $U$ of $\TT$ such that $\vect{\load(U)} \leq \vect{k}$.}
\Statex
\State $U \gets \emptyset$. 
\For{every vertex $v$ in some post-order traversal of $T$ ending at $r$}
    \For {every path $T_i \in \TT$ such that $root(T_i) = v$ and $T_i$ is a directed path of $\overrightarrow{T}$} 
        \If {$U + T_i$ is feasible}
            \State $U \gets U + T_i$.
        \EndIf
    \EndFor
\EndFor
\Statex
\State $\TT \gets \TT \setminus U$. 
\For {every object $o$ of $T$}
\State $k_o \gets k_o - \load(U,o)$
\EndFor
\Statex
\For{every non-leaf $u$ of $T$ that is an endpoint of some path in $\TT$}
\State Add a new vertex $v$ and and edge $uv$ to $T$
\State Add the edge $uv$ to every path with an endpoint $u$. 
\EndFor
\Statex
\State Construct the graph $G$ whose vertices are the leaves $\cl_r$ of $T$ 
\For{every path $P$ of $\TT$}
\State Let  $\ell$ and $\ell'$ be the endpoints of $P$.
\State Add the edge $\ell \ell'$ to $G$.
\EndFor
\State Compute $b_v$ for every vertex $v$ of $T$ using \eqref{eqn:degreeBounds}.
\State $H \gets $ an optimum solution of the hierarchical $b$-matching instance $(G,\vect{b})$.
\State $U_H \gets $ the paths of $\TT$ corresponding to the edges of $H$.
\State \Return $U \cup U_E$.
\end{algorithmic}
\end{algorithm}

%\section{Maximum k-colorable set}\label{sec:maxk}
%\input{MaxkColorableSet}

\section{Conclusion, Simulations and Open Problems}\label{sec:conc}
In this paper, we presented hardness results and an approximation algorithm for the problem of packing a maximum number of subtrees of a tree with heterogeneous capacity constraints.
This problem generalizes other problems in the literature.
Our analysis implies some of existing results in the literature regarding the maximum $k$-colorable subgraph problem in chordal graphs
and the maximum multi-commodity flow problem in trees.
Furthermore, it reveals an optimal subcase unknown so far.

We also presented an optimal algorithm for a special case in which all the paths are either directed or cross the root,
thus extending a known optimal algorithm for paths in star networks,
It is evident that the case where the capacities of all internal vertices and internal edges of the tree are unbounded,
or at least sufficient capacity is provisioned in the internal vertices and edges to handle all possible communication requests between the leaves of the tree,
is equivalent to a star in which the capacity of the center of the star is unbounded.
Surprisingly, such configurations can be found in data center architectures in practice (e.g., \cite{Roy2015Traffic}).

We conducted simulations to observe the approximation ratio of the greedy algorithm {\alg} in practice for paths on a tree with high capacities. 
We compared the performance of the greedy algorithm to the optimum which is computed by using algorithm \ref{alg:VPS}.
The approximation ratio was at most 1.6 with an average of 1.25 on random instances, and at most 1.005 in data center traffic as reported by \cite{Roy2015Traffic}. 
Plots of these results can be found in Figure \ref{fig:Random} and Figure \ref{fig:RealData} in the Appendix.

We now mention some open problems immediately relevant to our work. 
1) To the best of our knowledge, it is unknown whether the maximum $k$-colorable set problem in chordal graphs admits a $\ptas$. 
This is an interesting open question.
2) We considered the heterogeneous capacities case and showed that it is much harder to approximate than the homogeneous case.
It makes sense to analyze the case of bounded heterogeneity, i.e. when the ratio of the largest capacity to the smallest is bounded.
3) The most important extension seems to be the weighted case in which there is a profit associated with every subtree and one has to find
 a subset of trees that maximizes the total profit under given capacity constraints.

%%
%% Bibliography
%%

%% Please use bibtex, 

\newpage
\bibliography{GraphTheory,Optical,Mordo,Matching,Approximation,References}

\newpage
\appendix
\section{Simulation Results}

\begin{figure*}[ht]
	\centering
	\includegraphics[width=0.8\textwidth]{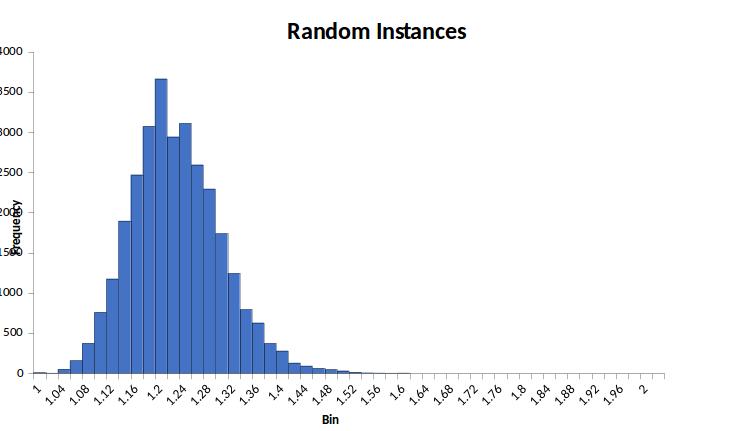}
	\caption{Approximation ratio of the greedy algorithm on random $\vpt$ graphs. The histogram is for 30.000 instances on random trees of 50 to 150 nodes, and number of paths from twice to four times the size of the tree.}
	\label{fig:Random}
\end{figure*}

\begin{figure*}[ht]

	\begin{subfigure}[t]{\textwidth}
	    \centering
		\includegraphics[width=0.8\textwidth]{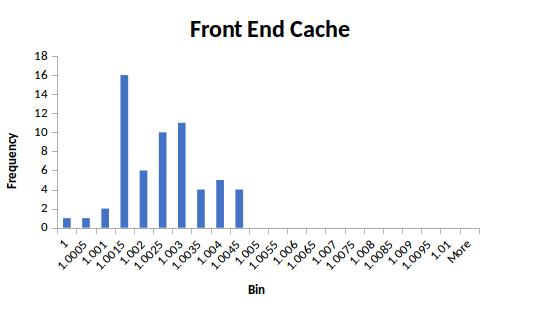}
		\label{fig:FrontEndCache}
	\end{subfigure}

	\begin{subfigure}[t]{\textwidth}
	    \centering
		\includegraphics[width=0.8\textwidth]{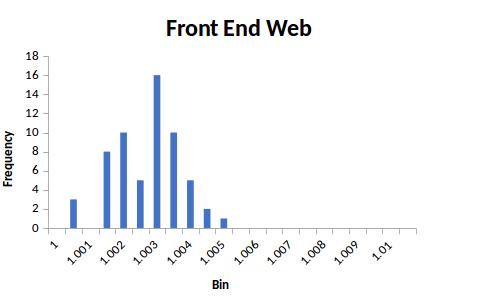}
		\label{fig:FrontEndWeb}
	\end{subfigure}

	\begin{subfigure}[t]{\textwidth}
	    \centering
		\includegraphics[width=0.8\textwidth]{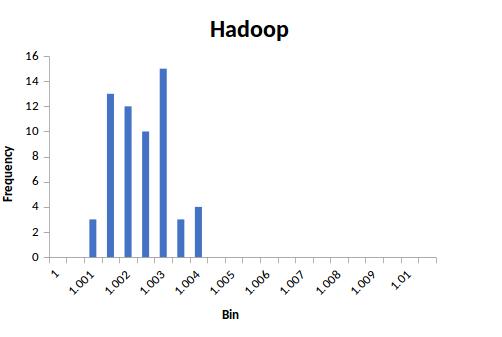}
		\label{fig:Hadoop}
	\end{subfigure}
	
	\caption{Approximation ratio of the greedy algorithm for 180 instances on data center traffic according to the distributions in \cite{Roy2015Traffic}. The individual histograms show the ratio for different traffic distributions resulting from three different applications.}
	\label{fig:RealData}

\end{figure*}
 
\end{document}